\newif\ifFull
\let\realsubsection\subsection
\let\realsubsection\subsection
\renewcommand{\subsection}[1]{\paragraph{\bf #1.}}
\begin{document}
\ifFull
\pagestyle{plain}
\fi
\renewenvironment{proof}{\noindent{\bf Proof:}}{\hspace*{\fill}\rule{6pt}{6pt}\bigskip}

\title{Combinatorial Pair Testing:\\ 
       Distinguishing Workers from Slackers}

\author{David Eppstein \and Michael T. Goodrich \and Daniel S.~Hirschberg}

\institute{
Dept. of Computer Science, University of California, Irvine, CA 92697 USA
}

\date{}

\maketitle

\begin{abstract}
We formalize a problem we call 
\emph{combinatorial pair testing} (CPT), which has applications
to the identification of uncooperative or unproductive participants 
in pair programming, massively distributed computing,
and crowdsourcing environments.
We give efficient adaptive and nonadaptive CPT algorithms and we show
that our methods use an optimal number of testing rounds to within
constant factors.
We also provide an empirical evaluation of some of our methods.
\end{abstract}

\section{Introduction}
\emph{Pair programming}~\cite{WilKes-03} 
is a software development paradigm where
programmers are teamed in pairs and write software together using a single
workstation.
This paradigm is said to produce fewer software bugs
and shorter programs than when programmers work alone~\cite{WilKesCun-IS-00}.
Consequently, it is often used to teach software design in
introductory programming courses~\cite{NagWilFer-SIGCSE-03}, 
including courses at the authors' institution~\cite{Jacobson:2008}, 
the University of California, Irvine.
This design paradigm presents an additional 
challenge, however, for evaluative purposes.
Namely, if programmers are always
working in pairs, how can a manager or instructor evaluate the performance of
programmers as individuals?

For instance, suppose 100 students enroll in an introductory programming course, among whom 80 are conscientious and 20 are lazy. We will call the conscientious students
\emph{workers} and the lazy ones \emph{slackers}.
In order to assign final grades to these students, the instructor would like to distinguish the workers from
the slackers, but whenever she pairs a worker and a slacker
on a project, the worker will do the assignment
individually and the project will be completed successfully in spite of the
slacker's laziness. 
Based on their performances, the instructor can only detect
slackers when two slackers are paired together.
Therefore, it would be useful for her to have systematic and effective
strategies for pairing the students in order to distinguish workers from slackers.

Motivated by this evaluation problem,
we are interested in this paper in
the design of efficient algorithms for generating testing schemes that can distinguish
workers from slackers.
We formulate such problems in a general framework, which we call
\emph{combinatorial pair testing} (CPT), and we consider a number of different
assessment settings, such as whether
all tests must be specified in advance
or whether tests may be determined adaptively.
This approach allows us to focus on natural performance
characteristics of such problems and
provides a general framework that unifies other 
diagnosis problems under the CPT heading.

\subsection{Combinatorial Pair Testing}
Suppose we are given a set $X$ of $n$ individuals, $\epsilon n$ of whom
are \emph{slackers} and $(1-\epsilon)n$ of whom are \emph{workers},
where $\epsilon$ may or may not be known in advance.
A \emph{pairwise test} is a function $T(x,y)$ that takes as its arguments two members
$x$ and $y$ of $X$, 
and produces as output a Boolean value, the result of a test 
performed for $x$ and $y$ based solely on the 
worker/slacker status of $x$ and $y$.
Naturally,
although this framework allows for $T$ to be any Boolean function,
some Boolean functions will be more interesting
than others. 
In this paper,
we are particularly interested in the following type of test:
\begin{itemize}
\item
\emph{Performance-based testing:} In a performance-based test,
we pair two individuals, $x$ and $y$,
and evaluate their output performance as a team.
Thus, if both $x$ and $y$ are slackers, then
$T(x,y)=\textbf{false}$, indicating that 
the two slackers, $x$ and $y$, have been paired 
together and didn't complete the assigned project.
If, on the other hand, $x$, $y$, or both, are workers, then
$T(x,y)=\textbf{true}$, indicating that 
the project was completed.
\end{itemize}
Performance-based testing is symmetric, 
so $T(x,y)=T(y,x)$, and, indeed, this test is 
equivalent to a Boolean OR of $x$ and $y$, 
where a slacker corresponds to a 0 and a worker corresponds to a 1.
Moreover, by 
De~Morgan's laws, any CPT algorithm that uses OR for $T(x,y)$ can 
be easily modified to produce a CPT algorithm that uses AND for $T(x,y)$.

In \emph{combinatorial pair testing} (CPT), only
pairwise tests are allowed. The tests are organized in a sequence of \emph{rounds}, 
in which each member of $X$ may be tested at most once, so up to $\lfloor n/2\rfloor$ pairwise tests can be performed in a single round. 
The choices made by CPT algorithms
can be determined adaptively or non-adaptively
and may be based on decisions that are either deterministic or randomized.
In some cases we will also require some prior knowledge of the relative numbers
of slackers and workers; for instance, using only performance-based tests,
it is not possible to distinguish the case of there being
only one slacker in $X$ from that of there being none.
Moreover, the efficiency of a given testing scheme may depend on
assumptions about the number of slackers.

Because our intended applications may involve sensitive information about individual misbehavior, we may also desire CPT algorithms to have additional security or privacy 
guarantees.
For instance, we may want our algorithms to be implementable in a way that
allows an instructor to outsource the evaluation of the tests without
revealing the input data~\cite{Atallah:2008}.
Such an approach is common in privacy-preserving computations (e.g., see
\cite{Yao1986}).

One additional security condition that we study in this paper, which appears to
be novel, is that of a detection algorithm
that is \emph{participant oblivious}.
A detection algorithm is participant oblivious if an individual
cannot detect whether he has been identified by the evaluator as a worker or
slacker based only on the pairings to which he has been assigned (without knowing the status of his or her partners or the outcome of their tests).
A nonadaptive algorithm must be participant oblivious, but we show
that some adaptive algorithms can also be participant oblivious.
The advantage of a participant-oblivious algorithm is that it allows the
evaluator to impose penalties to slackers or rewards to workers after the completion of the tests without tipping off a participant during
the testing process that the evaluator might already know his or her status.

\subsection{Prior Related Work}
Combinatorial pair testing
is related to \emph{combinatorial group testing}~\cite{du2000}. 
In combinatorial group testing, we are given a set, $S$, of $n$ items,
at most $d$ of which are ``defective.''
A test consists of selecting a subset, $T$, and determining whether $T$ contains any defective items.
Thus, combinatorial pair testing with performance-based testing is a 
restricted type of combinatorial group test in which every subset is a pair.
There are many known results and applications for algorithmic problems
in combinatorial group testing
(e.g., see~\cite{du2000,Eppstein:2006,gat-iidf-05}), 
but we are not aware 
of any results for the case where every subset must be a pair and in which tests
are issued in groups of $O(n)$ independent tests.
The closest previous analysis is by Hwang~\cite{Hwang:2000}, 
who analyzes random size-$k$ tests that are issued independently 
(that is, not in groups).
Instead, insisting that every test to be a pair and that the pairs
are issued in groups, as is required
in combinatorial pair testing, goes against a standard 
approach in combinatorial group testing, according to which one performs tests to limit 
the defective items
to a subset of size at most $O(d\log n)$ and then tests each such item
individually.

Combinatorial pair testing is also a generalization of
\emph{processor fault diagnosis}. In this problem,
we are given a set of $n$ processors, each of which can be 
either faulty or good. One processor can check another, but the
result of this check can only be trusted if the
processor doing the testing is good.
Often, in fact, one assumes that faulty processors deliberately misidentify
the ones they are testing~\cite{Blecher1983107,Pelc:1998}.
Beigel {\it et al.}~\cite{bhk-95,Beigel:1989,Beigel:1993} 
show that if the number of faulty
processors is sufficiently far below $n/2$, then
$O(n)$ tests can be organized into
a sequence of $O(1)$ parallel testing rounds, where each processor tests at most one
other in each round, so as to identify all faulty processors.
Thus, processor fault diagnosis forms a type of
combinatorial pair testing problem where the tests are
based on queries and, in the case when faulty processors deliberately
misidentify the ones they are testing, the Boolean function that determines the outcome of a test is the exclusive-or function.

In addition, combinatorial pair testing can be applied to 
cheater detection in 
\emph{massively distributed computations}~\cite{Goodrich2008199}, such as 
SETI@home and distributed.net.
These systems break very large computations into 
independent tasks,
which are then sent out to be executed to participants of the system (typically by using the idle time of individual personal computers).
The problem is that some participants cheat: instead of performing the
requested tasks, they rig their computers to 
return false or partial results, often merely for the sake of appearing on a
leader board of top participants.
To deal with this problem, these distributed systems often will send out the
same task to two participants at the same time, and if they both return the
same answer, then the output is accepted and the participants are labeled as
being honest (e.g., see~\cite{dg-acns-05}).
One challenge is that when two answers don't agree, the system doesn't immediately know
which participant(s) cheated.
The problem of identifying all the honest participants (and, hence, all the cheaters)
in a distributed computing environment can be formulated using the approach of this paper, and solved, using combinatorial
pair testing with performance-based tests based on the AND function.
Previous work on cheater-detection in distributed computations does not take
this approach, however, and is instead based on ad hoc 
solutions or reductions to processor fault diagnosis
(e.g., see~\cite{dg-acns-05,Du:2004,Goodrich2008199}).

Along these same lines,
combinatorial pair testing also has applications to 
\emph{crowdsourcing}, where complex, independent tasks, such as labeling 
images, is farmed out to a large set of individuals to perform.
One challenge in this case is that the group of individuals contains 
both ``experts,''
who are competent and diligent with their work, and ``spammers,''
whose performance is no better than a random oracle~\cite{NIPS2012_0328}.
Combinatorial pair testing can be applied in this context to weed
out the spammers, much in the same way
as it applies to cheater detection for massively distributed computations.

\subsection{Our Results}
Given a set, $X$, of $n$ individuals such that $\epsilon n$ of them 
are slackers,
we formalize the combinatorial pair testing (CPT) problem,
and we present and analyze several efficient CPT algorithms for identifying
the slackers in $X$.
For the adaptive case, we give an algorithm that uses 
$O(1/\epsilon)$ testing rounds, and we show this to be optimal to within constant factors.
Moreover, we show that our algorithm is participant oblivious
and we extend our algorithm to work in $O(1/\epsilon)$
testing rounds even if we don't know the value of $\epsilon$
in advance.
We also give both deterministic and randomized nonadaptive CPT algorithms, 
and we show that the performance of these algorithms is optimal to
within constant factors.
For example, our randomized nonadaptive CPT algorithm uses 
$O((1/\epsilon)\log n)$ testing rounds and succeeds in identifying all slackers
with high probability.
Our analysis of this algorithm
is based on an extension to the coupon collectors problem,
which we call the coupon packet collectors problem.
In addition, we give an empirical study of our randomized CPT algorithm that
provides experimental bounds for the number of tests needed to identify various
percentages of the slackers in $X$.

\section{Adaptive Algorithms}
In this section, we describe an adaptive participant-oblivious algorithm for identifying all the slackers in a performance-based testing problem.

\subsection{The Two-Phase Algorithm}
Assume that we know
there are $\epsilon n$ slackers.

In phase one, we perform the following computation:
\begin{itemize}
\item
\emph{Phase One:}
We group the individuals into $\lfloor \epsilon n/2\rfloor$ ``bins'' of size 
at most $\lceil 2/\epsilon\rceil$ each.
We then do $\lceil 2/\epsilon\rceil$ 
``round-robin'' rounds of testing to compare all pairs of
items in the same bin as each other, across all bins in parallel. 
\end{itemize}
This completes phase one, and gives us the following.

\begin{lemma}
\label{lem:two-slackers}
After phase one completes, we will have
identified all the slackers in each bin that has at least 2 slackers.
\end{lemma}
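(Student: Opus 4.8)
The plan is to establish two things: first, that the round-robin schedule in Phase One really does perform the test $T(x,y)$ for every pair $x,y$ lying in a common bin; and second, that within any bin containing at least two slackers, ``participates in a failed test'' exactly characterizes the slackers, so that they can all be reported.

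For the first point, I would invoke the standard tournament-scheduling fact: the edges of a complete graph $K_m$ decompose into $m-1$ perfect matchings when $m$ is even and into $m$ near-perfect matchings when $m$ is odd. Since every bin has size at most $\lceil 2/\epsilon\rceil$, the $\lceil 2/\epsilon\rceil$ rounds allotted to Phase One suffice to run such a schedule; and because in each round each individual is matched with at most one partner from its own bin, the matchings for all bins can be run in parallel inside a single round without exceeding the ``each member tested at most once per round'' constraint. Hence, when Phase One ends, $T(x,y)$ is known for every pair $x,y$ that share a bin.

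For the second point, fix a bin $B$ containing $k\ge 2$ slackers and use the fact that a performance-based test is the Boolean OR of the participants' statuses (slacker $=0$). If $x\in B$ is a worker, then $T(x,y)=\textbf{true}$ for every partner $y$, so $x$ appears in no failed test. If $x\in B$ is a slacker, then since $k\ge 2$ there is another slacker $x'\in B$ with $x'\ne x$, and the test $T(x,x')$ — performed during Phase One — returns $\textbf{false}$. Thus, restricted to $B$, an individual is a slacker if and only if it participates in at least one failed test, a condition fully determined by the Phase One outcomes; reporting exactly those individuals identifies all $k$ slackers in $B$.

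There is no real obstacle here; the only step needing a moment's care is checking that $\lceil 2/\epsilon\rceil$ parallel rounds genuinely complete every intra-bin round-robin, which follows from the scheduling fact above after the rounding is checked (an odd-sized bin of size $m\le\lceil 2/\epsilon\rceil$ needs $m\le\lceil 2/\epsilon\rceil$ rounds, and an even-sized bin needs $m-1$, strictly fewer). It is also worth stating explicitly why the hypothesis ``at least 2 slackers'' is necessary: a lone slacker in a bin is ORed only with workers, so all of its tests succeed and it is indistinguishable from a worker on the basis of Phase One alone — which is precisely what Phase Two will need to handle.
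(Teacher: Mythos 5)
Your proof is correct and follows essentially the same argument as the paper's: workers never appear in a failed test, while any slacker in a bin with at least two slackers is eventually paired with another slacker and exposed by a \textbf{false} outcome. The only addition is your explicit verification that $\lceil 2/\epsilon\rceil$ round-robin rounds cover all intra-bin pairs, a scheduling fact the paper takes for granted.
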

\begin{proof}
If a bin contains 0 or 1 slackers, then each pairing of two individuals in that bin will contain a worker. Thus, every test for that 
bin has the same outcome (true).
If, on the other hand, a bin contains 2 or more slackers, then each slacker 
in that bin
will eventually be paired with another slacker; hence, we discover each
slacker in that bin.
\end{proof}

More importantly, we also have the following.

\begin{lemma}
After phase one completes, we will have
identified at least $\lceil \epsilon n/2\rceil$
slackers.
\end{lemma}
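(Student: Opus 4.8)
The plan is to count how many slackers can possibly fail to be identified after phase one, and to show that this number is at most the number of bins. By Lemma~\ref{lem:two-slackers}, the only slackers not identified in phase one are those lying in a bin that contains at most one slacker. A bin with zero slackers contributes no unidentified slackers, and a bin with exactly one slacker contributes at most one. Since there are $\lfloor \epsilon n/2\rfloor$ bins in total, at most $\lfloor \epsilon n/2\rfloor$ slackers remain unidentified after phase one.

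Next I would subtract this from the total number of slackers. There are $\epsilon n$ slackers, which we treat as an integer since it counts individuals, so the number of slackers identified is at least
\[
\epsilon n - \left\lfloor \frac{\epsilon n}{2}\right\rfloor = \left\lceil \frac{\epsilon n}{2}\right\rceil,
\]
using the elementary identity $m-\lfloor m/2\rfloor=\lceil m/2\rceil$ for every integer $m$. This is exactly the claimed bound, so the lemma follows.

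The argument is essentially a pigeonhole count, so there is no deep obstacle; the only points requiring care are (i) justifying that a bin with exactly one slacker really does hide that slacker, which is immediate from the reasoning already used in the proof of Lemma~\ref{lem:two-slackers}, since every test within such a bin involves a worker and hence returns true, making the outcomes indistinguishable from those of an all-worker bin; and (ii) the floor/ceiling bookkeeping together with the implicit assumption that $\epsilon n$ is integral. I would state both explicitly to keep the count airtight.
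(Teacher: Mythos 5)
Your argument is correct and matches the paper's proof essentially step for step: both bound the number of unidentified slackers by the number of bins (since an unidentified slacker must be the lone slacker in its bin) and then apply the identity $\epsilon n - \lfloor \epsilon n/2\rfloor = \lceil \epsilon n/2\rceil$. Your version merely spells out the pigeonhole bookkeeping a bit more explicitly than the paper does.
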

\begin{proof}
By the previous lemma, a slacker can go undiscovered only if he is the sole
slacker assigned to a given bin.
Since there are $\lfloor \epsilon n/2\rfloor$ bins, then,
by a generalized pigeonhole argument, there has to be 
at least $\epsilon n - \lfloor \epsilon n/2\rfloor = \lceil \epsilon n/2\rceil$
slackers that are assigned to bins that each contain at least two slackers.
\end{proof}

Given that we now have identified at least $\lceil \epsilon n/2\rceil$
slackers, in phase two we perform the following computation.
\begin{itemize}
\item
\emph{Phase Two:}
We choose $\lceil \epsilon n/2\rceil$ known slackers and assign one of them to each bin randomly.
We assign the remaining individuals to bins, while keeping the bins to be of size
at most $\lceil 2/\epsilon\rceil$.
Moreover, we choose these assignments uniformly at random, 
subject to the rule that each
bin contains a slacker and that no two individuals who were paired in round one are assigned to the same bin as each other.
We then do $\lceil 2/\epsilon\rceil$ 
``round-robin'' rounds of testing to compare all pairs of
items in the same bin as each other, across all bins in parallel. 
\end{itemize}
This completes phase two. 

From the perspective of any individual, their bin
assignment 
is done at random, with every bin being equally likely, and the people they are paired with are equally likely to come from any other bin from phase one.
Moreover, the only nonadaptive step is the assignment of known slackers to
bins in phase two, which is done via a random permutation, similar to how
elements not known to be slackers are assigned.
Thus, so long as individuals in our group do not collude, this algorithm is
participant oblivious.
Note, in addition, that any bin that now contains a previously undiscovered
slacker, will necessarily contain at least two slackers.
Thus, by Lemma~\ref{lem:two-slackers},
we will discover this (and all other) remaining slackers in phase two.

\begin{theorem}
Given a set, $X$, of $n$ workers and slackers, such that $\epsilon n$ of the
individuals in $X$ are known to be slackers, 
we can identify all the slackers in $X$
in $O(1/\epsilon)$ rounds of disjoint pairwise tests, in a participant-oblivious adaptive fashion. 
\end{theorem}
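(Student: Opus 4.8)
The plan is to derive the theorem directly from the two-phase algorithm, combining (i) a round count, (ii) the correctness guarantees of Lemma~\ref{lem:two-slackers} and the second lemma above, and (iii) a symmetry argument for participant-obliviousness.

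First I would count rounds. Each phase consists of a grouping step that uses no tests, followed by $\lceil 2/\epsilon\rceil$ round-robin rounds run on all bins in parallel. Since every bin has size at most $\lceil 2/\epsilon\rceil$, that many rounds suffices to schedule every pair of individuals that share a bin, and in each round an individual is matched with at most one partner inside its bin; as the bins are disjoint, the pairs tested in a round are globally disjoint. Hence the algorithm uses $2\lceil 2/\epsilon\rceil = O(1/\epsilon)$ rounds of disjoint pairwise tests, which is the quantitative part of the claim.

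Next I would establish correctness. By the second lemma above, phase one identifies at least $\lceil\epsilon n/2\rceil$ slackers, and by Lemma~\ref{lem:two-slackers} a slacker survives phase one undetected only if it was the unique slacker in its phase-one bin. In phase two every bin receives a distinct known slacker, so any bin holding a still-undetected slacker now holds at least two slackers; invoking Lemma~\ref{lem:two-slackers} once more, this time on phase two's round-robin tests, shows that every such slacker --- hence every slacker in $X$ --- is identified by the end of phase two. A bookkeeping point worth checking is that the constrained random assignment of phase two is realizable at all: there are at least $\lceil\epsilon n/2\rceil$ known slackers for the $\lfloor\epsilon n/2\rfloor$ bins, the fewer than $n$ remaining individuals fit since $\lfloor\epsilon n/2\rfloor\lceil 2/\epsilon\rceil \ge n - O(1/\epsilon)$, and the ``no repeated pair'' rule forbids only $O(1/\epsilon)$ co-bin choices per individual out of $\Theta(n)$ candidates; any low-order shortfall is absorbed by a small adjustment to the bin sizes or the number of bins that does not change the $O(1/\epsilon)$ bound.

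Finally, for participant-obliviousness I would argue that, for every individual $x$, the only thing $x$ observes --- the list of partners it is paired with over the two phases --- has the same marginal behavior whether or not $x$ has been identified: its phase-one bin is a uniformly random block of a random partition, and its phase-two bin is chosen uniformly at random (subject only to the capacity and no-repeated-pair constraints), with its phase-two partners equally likely to come from any other phase-one bin, all independently of $x$'s status. I expect this to be the delicate step, because phase two is not fully symmetric --- exactly one member of each bin is a designated known slacker, and ``being a known slacker'' is itself correlated with true status. The resolution to spell out is that $x$ sees only the \emph{identities} of its partners, not their designations, not their worker/slacker statuses, and not the outcomes of their tests, so the special role played by the known slackers is invisible to $x$; combined with the fact that the number of partners $x$ receives (its bin size minus one) does not depend on $x$'s status, $x$ gains no leverage for guessing its own classification, provided --- as the algorithm assumes --- that participants do not pool their observations.
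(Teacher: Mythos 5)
Your proposal is correct and follows essentially the same route as the paper: it invokes the two-phase algorithm with Lemma~\ref{lem:two-slackers} and the pigeonhole lemma to guarantee that every phase-two bin holding an undetected slacker contains at least two slackers, counts $2\lceil 2/\epsilon\rceil = O(1/\epsilon)$ round-robin rounds, and argues participant-obliviousness from the uniformly random, status-blind bin assignments (assuming no collusion), just as the paper does. Your extra bookkeeping on the feasibility of the phase-two assignment is a reasonable refinement of a detail the paper leaves implicit, but it does not constitute a different approach.
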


This bound is optimal, to within constant factors, as the following
theorem establishes.

\begin{theorem}
Given a set, $X$, of $n$ workers and slackers, such that $\epsilon n$ of the
individuals in $X$ are slackers, then identifying all the slackers in $X$ (either deterministically in the worst case or randomly with success probability $\ge 1/2$)
requires at least
$\Omega(1/\epsilon)$ rounds of disjoint pairwise tests.
\end{theorem}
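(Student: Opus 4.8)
The plan is to prove the lower bound by an adversary/information-theoretic argument, exhibiting a situation in which any testing scheme that uses too few rounds cannot have enough information to pinpoint every slacker. First I would note that with performance-based (OR) tests, a round consisting of disjoint pairs gives at most $\lfloor n/2 \rfloor$ bits of output, but the real bottleneck is not total information — it is that a single slacker can be ``hidden'' until it happens to be paired with another slacker. The key observation is this: if in some round an individual $x$ is paired with a worker, the test returns \textbf{true} and reveals nothing about $x$; the only way to learn that $x$ is a slacker is to pair $x$ with another slacker at some point (or to deduce $x$'s status indirectly, which for OR tests still ultimately requires co-locating slackers). So I would set up an adversary argument in which the adversary keeps the set of slackers ambiguous as long as possible.

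Concretely, I would consider the instance with exactly $t = \epsilon n$ slackers and have the adversary answer every test \textbf{true} as long as this is consistent with \emph{some} valid placement of $t$ slackers. In any single round, the algorithm proposes a partition of (a subset of) $X$ into pairs; for the answers to force the identification of a slacker, the algorithm must ``catch'' a pair of slackers, and a round of disjoint pairs contains at most $\lfloor n/2 \rfloor$ pairs. The crucial counting step: after $r$ rounds, the algorithm has queried at most $r \lfloor n/2 \rfloor$ pairs total; the union of these pairs, viewed as a graph $G$ on $X$, has at most $rn/2$ edges. If the adversary has answered \textbf{true} everywhere, then the set of slackers must be an independent set in $G$ (no queried pair can consist of two slackers under the all-\textbf{true} history), and conversely any independent set of size $t$ is a consistent candidate. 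So the algorithm has succeeded in identifying the slackers only if $G$ has a \emph{unique} independent set of size $\ge t$ — but a graph with only $rn/2$ edges, when $r = o(1/\epsilon)$, is far too sparse for that: for example, if $r \le c/\epsilon$ for a small constant $c$, then $G$ has at most $cn/(2\epsilon) \cdot \epsilon = cn/2$... wait, $rn/2 \le cn/(2\epsilon)$ edges, and a graph on $n$ vertices with $m$ edges always has an independent set of size $\ge n^2/(2m+n) \ge n\epsilon/(c+\epsilon)$, which exceeds $\epsilon n$ (with room to spare for non-uniqueness) once $c$ is a small enough constant. Hence many distinct size-$t$ independent sets coexist, the all-\textbf{true} history is consistent with all of them, and the algorithm cannot have determined the slacker set. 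This yields the deterministic worst-case bound $r = \Omega(1/\epsilon)$.

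For the randomized case with success probability $\ge 1/2$, I would invoke Yao's principle: it suffices to exhibit a distribution over inputs on which every deterministic $r$-round algorithm with $r = o(1/\epsilon)$ errs with probability $> 1/2$. The natural choice is the uniform distribution over all $\binom{n}{\epsilon n}$ placements of $\epsilon n$ slackers. Fix any deterministic algorithm; after $r$ rounds against the all-\textbf{true} responses, the queried graph $G$ (which, since the transcript so far is all-\textbf{true}, is the same graph the algorithm would build for \emph{every} input in the support on which it has seen only \textbf{true}s — here one has to be slightly careful because adaptivity means $G$ depends on the answers, but conditioned on the all-\textbf{true} branch $G$ is fixed) has $O(n/\epsilon)$ edges and therefore, by the same sparsity argument, a large family of size-$\epsilon n$ independent sets; conditioned on the true slacker set being among them — an event of constant probability if the parameters are tuned — the algorithm's output is determined while the true answer is uniform over a set of size $\ge 2$, so it errs with probability $\ge 1/2$ on that conditional, giving overall error bounded below by a constant. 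A cleaner variant: restrict attention to a single pair of candidate slacker-sets $S_0, S_1$ that differ in exactly two elements and are both independent in any sparse $G$; the algorithm cannot distinguish them, so it errs on at least one with probability $\ge 1/2$.

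The main obstacle I expect is handling adaptivity cleanly in the randomized lower bound: because the adversary's ``always \textbf{true}'' strategy only stays on one branch of the algorithm's decision tree, I need to argue that with constant probability over the random input the algorithm indeed stays on the all-\textbf{true} branch for all $r$ rounds (equivalently, never co-locates two slackers), and then that, conditioned on this, it still cannot identify the slacker set. Bounding the probability that a sparse adaptively-chosen graph avoids containing an edge inside a random $\epsilon n$-subset requires a short second-moment or union-bound estimate — each queried pair is ``bad'' with probability $\approx \epsilon^2$, there are $O(n/\epsilon)$ of them, so the expected number of bad pairs is $O(\epsilon n)$, which is unfortunately not $o(1)$; the fix is to take $r$ a sufficiently small constant times $1/\epsilon$ and restrict to, say, $\epsilon n/2$ slackers with $n/2$ ``forced workers'' so that the bad-pair probability drops enough, or to argue directly at the level of independent-set multiplicity rather than through the all-\textbf{true} event. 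Getting this constant-probability bookkeeping right — without losing the $\Omega(1/\epsilon)$ — is the delicate part; everything else is the elementary graph-sparsity fact that a graph with $O(n/\epsilon)$ edges has $\omega(\epsilon n)$-sized, hence non-unique, independent sets of size $\epsilon n$.
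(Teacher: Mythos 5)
Your deterministic argument is sound and genuinely different from the paper's: the adversary answers every test \textbf{true}, the queried pairs form a graph with at most $rn/2$ edges, and by the Tur\'an bound a graph with $O(n/\epsilon)$ edges has an independent set of size strictly larger than $\epsilon n$, so at least two distinct size-$\epsilon n$ slacker sets remain consistent with the transcript and the algorithm cannot be correct on all of them. (The paper instead deduces the deterministic bound by averaging over its hard input distribution.) That half stands.

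The randomized half, however, has a genuine gap, and it is exactly the one you flag: conditioning on the all-\textbf{true} branch cannot work, because with $\epsilon n$ random slackers and $\Theta(n/\epsilon)$ queried pairs the expected number of slacker--slacker pairs is $\Theta(\epsilon n)$, so the all-\textbf{true} event has probability tending to $0$ no matter how you tune the constant in $r\le c/\epsilon$ or rescale to $\epsilon n/2$ slackers (that rescaling leaves the expectation at the same order). Your two proposed fixes therefore do not close the gap, and ``argue directly at the level of independent-set multiplicity'' is not carried out. The missing idea, which is how the paper proceeds, is to lower the bar from ``the transcript reveals nothing'' to ``some one slacker stays hidden'': take the input to be a uniformly random placement of slackers, let $x$ be a uniformly random slacker, and bound, round by round, the probability that $x$ gets paired with a slacker. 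In any round at most $\epsilon n$ of the still-unidentified individuals can sit in a pair containing a slacker, while by symmetry $x$ is uniform over the $\ge n/2$ unidentified individuals (only $O(\epsilon n)\le n/2$ people can have been identified in $O(1/\epsilon)$ rounds), so $x$ is caught in that round with probability $O(\epsilon)$; a union bound over $c/\epsilon$ rounds with $c$ small keeps the failure probability of the algorithm above $1/2$, since identifying \emph{all} slackers requires identifying $x$. Without replacing your all-\textbf{true} conditioning by this single-slacker (or an equivalent) argument, the randomized lower bound is not established.
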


\begin{proof}
We consider the randomized case first, and we assume a randomized input distribution in which all permutations of workers and slackers are equally likely. Let $x$ be a random variable whose value is one of the slackers in the input, chosen uniformly at random among the slackers. In the first $1/(2\epsilon)$ rounds of testing, at most $n/2$ of the members of $X$ may become identified.
 In any given round of testing in which $x$ has not already been identified as a slacker, at most $\epsilon n -1$ of the unidentified members of $X$ can be paired with (identified or unidentified) slackers other than $x$, and $x$ is equally likely to be any one of the $\ge n/2$ unidentified members,  so the probability that $x$ becomes identified by being paired with a slacker is at most $(\epsilon n-1)/(n/2)<\epsilon/2$. By the union bound, after $1/(4\epsilon)=\Omega(1/\epsilon)$ rounds, $x$ will remain unidentified with probability greater than $1/2$, so the probability that all slackers are identified is less than $1/2$.

Since this randomized input distribution fools even a randomized algorithm with probability at least $1/2$, after $\Omega(1/\epsilon)$ rounds, it follows that for every deterministic algorithm there exists an input in this distribution that is certain to fool the algorithm with the same number of rounds. 
\end{proof}

\subsection{Estimating Epsilon}
Suppose now that there are
$\epsilon n$ slackers, but we do not know the value of $\epsilon$.
Instead, let us assume we
have an estimate, $\epsilon'$, and our goal is to 
use $O(1/\epsilon')$ rounds,
and either find all $\epsilon n$ slackers, with
$\epsilon'\le 2\epsilon$,
or determine that $\epsilon'>\epsilon$.

Consider again the above two-phase algorithm, but now 
assume that it is calibrated for $\epsilon'$ instead of $\epsilon$.
One possible outcome of phase one, is that we discover
at least $\lceil \epsilon' n/2\rceil$ slackers, which then allows us 
to discover all the slackers in phase two.
In this case,
\[
\epsilon n \ge \epsilon' n/2,
\]
hence,
$\epsilon' \le 2\epsilon$.

Alternatively,
phase one may discover fewer than 
$\lceil \epsilon' n/2\rceil$ slackers. 
Since a bin that appears to hold no slackers can hold at most one,
this implies that 
\[
\epsilon<\epsilon'/2 + \epsilon'/2=\epsilon'.
\]
Thus, our two-phase algorithm achieves our goal.

We can therefore now use our two-phase algorithm 
in an iterative fashion. We start with $\epsilon'=1/2$, and use the two-phase
algorithm with this estimate for $\epsilon$.
If we discover all the slackers, then we are done.
Otherwise, we determine that $\epsilon<\epsilon'$. 
In this case, we set
$\epsilon'\leftarrow \epsilon'/2$ and we repeat the process with this
estimate.
Eventually, we will reach a point where we discover all the slackers, with
$\epsilon'\le 2\epsilon$.
Moreover, since the previous iteration, if there is one, would have 
failed, we also know that $\epsilon<2\epsilon'$, that is, 
$\epsilon'>\epsilon/2$.
The number of testing rounds is therefore proportional to
\[
2+4+8+\cdots + 1/\epsilon' \le 
2+4+8+\cdots + 2/\epsilon \le 4/\epsilon.
\]
Therefore, even without knowing the value of $\epsilon$,
the number of rounds is $O(1/\epsilon)$,
which implies the following.

\begin{theorem}
Given a set, $X$, of $n$ workers and slackers, such that $\epsilon n$ of the
individuals in $X$ are slackers, 
we can identify all the slackers in $X$
in $O(1/\epsilon)$ rounds of $O(n)$ pairwise tests
per round, in a participant-oblivious adaptive algorithm, 
without knowing $\epsilon$ in advance.
\end{theorem}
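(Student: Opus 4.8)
The plan is to treat the two-phase algorithm, together with its participant-obliviousness guarantee, as a black box and wrap it in a geometric search over the unknown $\epsilon$, arranging matters so that the failed attempts cost only a telescoping series of rounds.

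First I would start with the guess $\epsilon' = 1/2$ and run the two-phase algorithm calibrated to $\epsilon'$. The pivotal observation, which is really the only content beyond the earlier theorem, is the dichotomy visible at the end of phase one: either phase one has already exposed at least $\lceil \epsilon' n/2 \rceil$ slackers --- in which case phase two finishes the job (every still-hidden slacker now shares a bin with a known one, so Lemma~\ref{lem:two-slackers} applies) and, since $\epsilon n \ge \epsilon' n/2$, the run also certifies $\epsilon' \le 2\epsilon$ --- or phase one has exposed fewer than that, in which case, using that an apparently slacker-free bin holds at most one slacker, one concludes $\epsilon < \epsilon'$ and the attempt has failed. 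On failure I would set $\epsilon' \leftarrow \epsilon'/2$ and repeat, remembering all slackers discovered so far.

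Next I would argue termination and the round count. The guesses $1/2, 1/4, 1/8, \dots$ strictly shrink, and as soon as $\epsilon' \le 2\epsilon$ the successful branch becomes reachable, so the search halts after $O(\log(1/\epsilon))$ attempts at some final value $\epsilon^\star \le 2\epsilon$; because the preceding (failed) attempt forced $\epsilon < 2\epsilon^\star$, we moreover have $\epsilon^\star > \epsilon/2$, so the terminal guess is correct to within a factor of two. The $i$-th attempt uses $\Theta(1/\epsilon'_i)$ rounds, and since these form a geometric progression whose last term is $\Theta(1/\epsilon^\star) = \Theta(1/\epsilon)$, the total telescopes,
\[
2 + 4 + 8 + \cdots + 1/\epsilon^\star \;\le\; 2 + 4 + 8 + \cdots + 2/\epsilon \;\le\; 4/\epsilon ,
\]
giving $O(1/\epsilon)$ rounds of $O(n)$ pairwise tests each.

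Finally I would observe that participant-obliviousness is inherited: each attempt is a participant-oblivious execution of the two-phase algorithm, and from any non-colluding individual's local view (partner identities only, no test outcomes, no partner statuses) the whole run is simply a longer concatenation of rounds of exactly the same two kinds --- uniformly random bin assignment, and assignment respecting a random permutation --- where in particular the seeding of known slackers in each phase two is carried out by a random permutation that, to someone who sees only who their partners are, is indistinguishable from the placement of the other individuals. The step needing the most care is shallow but easy to botch: one must \emph{halve} the guess, so that the per-attempt costs form an \emph{increasing} geometric series capped at $O(1/\epsilon)$, rather than, say, subtracting a constant; and one should check that the floors and ceilings defining the bins keep the phase-one dichotomy and its pigeonhole count valid for every guess in the sequence --- both routine.
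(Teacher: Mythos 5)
Your proposal is correct and follows essentially the same route as the paper: the same dichotomy at the end of phase one (success certifies $\epsilon' \le 2\epsilon$; failure, via the at-most-one-slacker-per-quiet-bin observation, certifies $\epsilon < \epsilon'$), the same halving schedule, and the same geometric-series bound $2+4+\cdots+2/\epsilon \le 4/\epsilon$ on the total number of rounds. The added remarks on inherited participant-obliviousness are consistent with what the paper asserts for the two-phase building block.
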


In
\ifFull
Section~\ref{sec:many}, we explore optimizations to
\else
the full version of this paper, we explore optimizations to
\fi
the constant factors in
the above bounds, in adaptive CPT algorithms
for the case when $\delta=1-\epsilon\le 1/2$, that is, when at least half
of the individuals are slackers.
Such instances of the combinatorial pair testing problem arise naturally
in massively distributed and crowdsourcing applications, for example,
where the roles of slackers and workers are reversed and the testing
function, $T$, is Boolean AND instead of OR.

\section{Nonadaptive Pair Testing}
In this section, we study nonadaptive algorithms for combinatorial 
pair testing, to identify $\epsilon n$ slackers in a group of $n$
individuals. In this case, if we assume that we 
do not know the value of $\epsilon$, then the only valid algorithm is the
trivial brute-force algorithm that compares every pair of individuals, since
a nonadaptive algorithm must specify all its tests in advance and it is
possible that $\epsilon=2/n$.
Therefore, we assume that we know in advance that there are $\epsilon n$ slackers.

\subsection{Deterministic Nonadaptive Pair Testing}
Unfortunately,
nonadaptive deterministic pair testing is not very interesting,
because it requires a linear number of rounds.
The argument is simple: suppose a deterministic nonadaptive pair testing algorithm could use at most $(1-\epsilon) n/2$
rounds. Then, in the graph of pairs that are tested by the algorithm,
each vertex would have at most $(1-\epsilon) n/2$ neighbors. An adversary could choose
one edge of the graph, make one of its two endpoints a slacker and the
other endpoint a worker, set all neighbors of these two vertices to be
workers, and fill out the rest of the graph arbitrarily to fit whatever
number of slackers and workers is desired.
From the set of tests that are performed, there is no way to distinguish
which of the two endpoints of the chosen edge is the slacker and which
is the worker.
Therefore, there must be at least $\Omega((1-\epsilon)n)$ rounds in a
deterministic nonadaptive CPT algorithm, which, for any fixed $\epsilon<1$,
is asymptotically not any better than the brute-force algorithm that tests
every pair.

This bound can be achieved as an upper bound, as well, using an algorithm
that pairs each individual, $x$, with at least $(1-\epsilon)n+1$ other distinct 
individuals, using $O((1-\epsilon)n)$ rounds. For this algorithm, at least one of the individuals
paired with each such $x$ must be a slacker.

\subsection{Randomized Nonadaptive Pair Testing}
Despite the nonexistence of efficient deterministic nonadaptive pair testing algorithms, there is a simple randomized algorithm 
for nonadaptive randomized testing, which
succeeds with high probability using many fewer tests than the deterministic
nonadaptive solution.
In particular, let us repeatedly choose a random matching of all the
members of the set, $X$, for some value, $k$, 
number of rounds.
Each matching corresponds to a round of testing. 
For instance, for $k=(c/\epsilon)\log n$, for 
a sufficiently large constant, $c\ge 1$,
then this scheme uses $O((1/\epsilon)\log n)$ rounds and
$O((n/\epsilon)\log n)$ tests in total.

\subsubsection{Relation to the coupon collector's problem.}
The expected performance 
of the nonadaptive randomized algorithm described above can be analyzed 
precisely using a variant of the classical \emph{coupon collector's problem}.

In the coupon collector's problem, a collector wishes to collect a set of $n$ trading cards, by randomly acquiring one card at a time,
and the problem is to calculate the number of steps that are required until, with high probability, all cards have been collected. 
Now consider a slight variation, 
which we call the \emph{coupon packet collector's problem}:
instead of buying one card at a time, the collector buys the cards in
packets of $m$ cards~\cite{Stadje1990}.
Each packet of trading cards is guaranteed to have no
duplicates, and is uniformly random among all $m$-card samples of the whole
set of cards. How does this affect the total time required for the
collector? If $m$ is much smaller than $n$, the difference between this problem and the standard coupon collector's problem is very small: a random sample of $m$ cards,
each independently and uniformly randomly sampled, is very likely to be
duplicate-free. But if $m$ is a constant fraction of $n$,  then the avoidance of
duplicates in each packet is very likely to cause the number of packets that
the collector needs to collect to be smaller by a constant fraction than the
number that a one-at-a-time collector would need. But what is the fraction?

In the coupon packet collector's problem, the probability that a card
remains uncollected after $k$ rounds is $(1-m/n)^k$. So, after
$k$ rounds, by the linearity of expectation, the expected number
of uncollected cards is $n(1-m/n)^k$.
Thus, for $k=(1+\alpha)\log_{1/(1-m/n)} n$ rounds, 
the expected number of uncollected cards is $1/n^\alpha$;
hence, 
by Markov's inequality, 
with very high probability,
$1-1/n^{\alpha}$, 
all the cards are collected.

In the pair testing problem, observe that a
slacker's status is identified whenever the slacker is paired with another
slacker, and a student's status is identified whenever that student is
paired with a known slacker. If we allow these identifications to be made
retroactively ({\em i.e.}, once we find a known slacker we use that identity to
confirm as workers all the other students the slacker has already been
paired with) then there is a very simple criterion for whether
we have identified everybody: we have done so if and only if all students have been
paired at least once with a slacker. More weakly, we have identified all
slackers whenever the slackers have all been paired with another slacker in
some round of testing. Suppose that there are $m$ slackers and $n$ total
students. In each round, exactly $m$ students will be paired with slackers, so
it is very much like the coupon packet collector's problem, where the
trading cards in a packet correspond to the students that are paired with
slackers. There is a small complication, however: in the pair testing
problem the sets of students that are identified are not quite uniformly
random over all $m$-element subsets of students. In particular, the slackers
are slightly less likely to be paired with other slackers than the workers, because there
are fewer other slackers for them to be paired with.

To be precise, in the case that there are an even number of students,
 a slacker has probability exactly $(n-m)/(n-1)$ of remaining
unidentified after one round, because there are $n-1$ students the slacker could be paired with, each of which is equally likely, and $n-m$ of which (the workers) fail to identify the slacker.
The probability that a specific student is identified in any one round is independent of the same probability for the same student in a different round, 
so after
\[
k = 
(1+\alpha)\log_{\frac{n-1}{n-m}} m
\]
rounds, the probability that an individual slacker remains unidentified is 
$1/m^{1+\alpha}$. 
Similarly, a worker has probability exactly $(n-m-1)/(n-1)$ of not having been paired with a slacker after one round, and probability 
$1/(n-m)^{1+\alpha}$ of never having been paired with a slacker after
\[
k = 
(1+\alpha)\log_{\frac{n-1}{n-m-1}} (n-m)
\]
rounds.
Different students have probabilities that are not independent of each other, 
but by linearity of expectation
after
\[
k = 
(1+\alpha)\max\left\{\log_{\frac{n-1}{n-m}} m,\log_{\frac{n-1}{n-m-1}} (n-m))\right\}
\]
rounds the expected number of students who have not been paired with 
a slacker is $\min\{1/m^{\alpha},1/(n-m)^{\alpha}\}$, 
so by Markov's inequality, 
with high probability all students will be identified. 
In the case that there are an odd number of students, 
there are $n$ alternatives for each student in each round rather than $n-1$, 
so the number of rounds needed is instead
\[
k = 
(1+\alpha)(\log_{\frac{n}{n-m+1}} m+\log_{\frac{n}{n-m}} (n-m)).
\]
In either case, for $m=\epsilon n$ slackers, if we 
extend the above two bounds so that the number of rounds is increased to
\[
k = (1+\alpha)\log_{1/(1-\epsilon)} n,
\]
then the expected number of 
unclassified students is $1/n^{\alpha}$.
Thus, by Markov's inequality, there are no unclassified students
with high probability, $1-1/n^{\alpha}$.
Choosing $\alpha\ge 1$ to be a fixed constant, and using the inequality,
$x<-\ln (1-x)$, for $0<x<1$, we get the following result.

\begin{theorem}
Given a set, $X$, of $n$ individuals, such that $\epsilon n\ge 2$ of 
them are slackers and the rest are workers, we can distinguish the workers
and slackers using $O((1/\epsilon)\log n)$ rounds of 
random performance-based tests, with $O(n)$ tests per round, with high
probability, $1-1/n^c$, in a nonadaptive fashion, for any fixed constant
$c\ge 1$.
\end{theorem}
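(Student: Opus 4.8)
The plan is to analyze the random-matching scheme described just above: in each of the $k$ rounds we pick a uniformly random (near-)perfect matching of $X$, test each matched pair, and classify a student as a slacker exactly when some test involving that student returns \textbf{false}, applying retroactive relabeling (once a student is confirmed a slacker, every earlier partner of that student is confirmed a worker). This is manifestly nonadaptive and uses $\lfloor n/2\rfloor = O(n)$ tests per round, so the entire content of the theorem is the bound $k = O((1/\epsilon)\log n)$ for a $1-1/n^c$ success probability. First I would reduce correctness to the covering condition from the coupon packet collector's discussion: all slackers are identified once every slacker has, in some round, been matched with another slacker, and in fact every student is identified once everyone has been matched with some slacker in some round. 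Since $\epsilon n \ge 2$, there is always at least one other slacker available to match with, so this event is feasible.

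Next I would bound the failure probability for a single fixed student. As computed in the text, a fixed slacker fails to be matched with a slacker in a given round with probability $(n-m)/(n-1)$ when $n$ is even (the odd-$n$ case, and the worker case, are analogous with slightly different constants), and because the matchings of distinct rounds are chosen independently, the probability that this student is still unidentified after $k$ rounds is the $k$-th power of that per-round quantity. Setting $k = (1+\alpha)\log_{1/(1-\epsilon)} n$ and using the inequality $x < -\ln(1-x)$ for $0<x<1$ to replace each of the awkward bases $\frac{n-1}{n-m}$, $\frac{n-1}{n-m-1}$, etc., by the single cleaner base $1/(1-\epsilon)$ --- exactly the manipulation sketched in the paragraph preceding the theorem --- makes this per-student failure probability at most $1/n^{1+\alpha}$.

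Finally, I would take a union bound over the $n$ students (equivalently, apply linearity of expectation and Markov's inequality to the number of unidentified students, as the text does): the probability that anyone remains unidentified is at most $n\cdot n^{-(1+\alpha)} = n^{-\alpha}$. Choosing the fixed constant $\alpha = c$ yields the claimed success probability $1-1/n^c$, while $k=(1+c)\log_{1/(1-\epsilon)} n = (1+c)\,\ln n/\ln(1/(1-\epsilon)) \le (1+c)(\ln n)/\epsilon = O((1/\epsilon)\log n)$ by the same base-change inequality. The only delicate point is to verify that increasing $k$ to $(1+\alpha)\log_{1/(1-\epsilon)}n$ simultaneously dominates all of the asymmetric per-round survival bounds (slacker vs.\ worker, even vs.\ odd $n$), i.e., that the inequality $x<-\ln(1-x)$ is applied in the correct direction in each case; no independence across students is required, since the union bound suffices.
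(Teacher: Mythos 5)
Your proposal is correct and follows essentially the same route as the paper: the same random-matching scheme, the same per-round survival probabilities $(n-m)/(n-1)$ and $(n-m-1)/(n-1)$ with independence across rounds for a fixed student, the same base change via $x<-\ln(1-x)$ to reach $k=O((1/\epsilon)\log n)$, and a union bound that is just the paper's linearity-of-expectation-plus-Markov step in disguise. The ``delicate point'' you flag (that the uniform base $1/(1-\epsilon)$ must dominate the slightly larger slacker survival rate, which is offset by the larger argument $n$ in place of $m$) is treated at exactly the same level of detail in the paper itself, so nothing further is needed.
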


\begin{figure}[p]
\begin{center}
\includegraphics[width=.5\textwidth, angle=270]{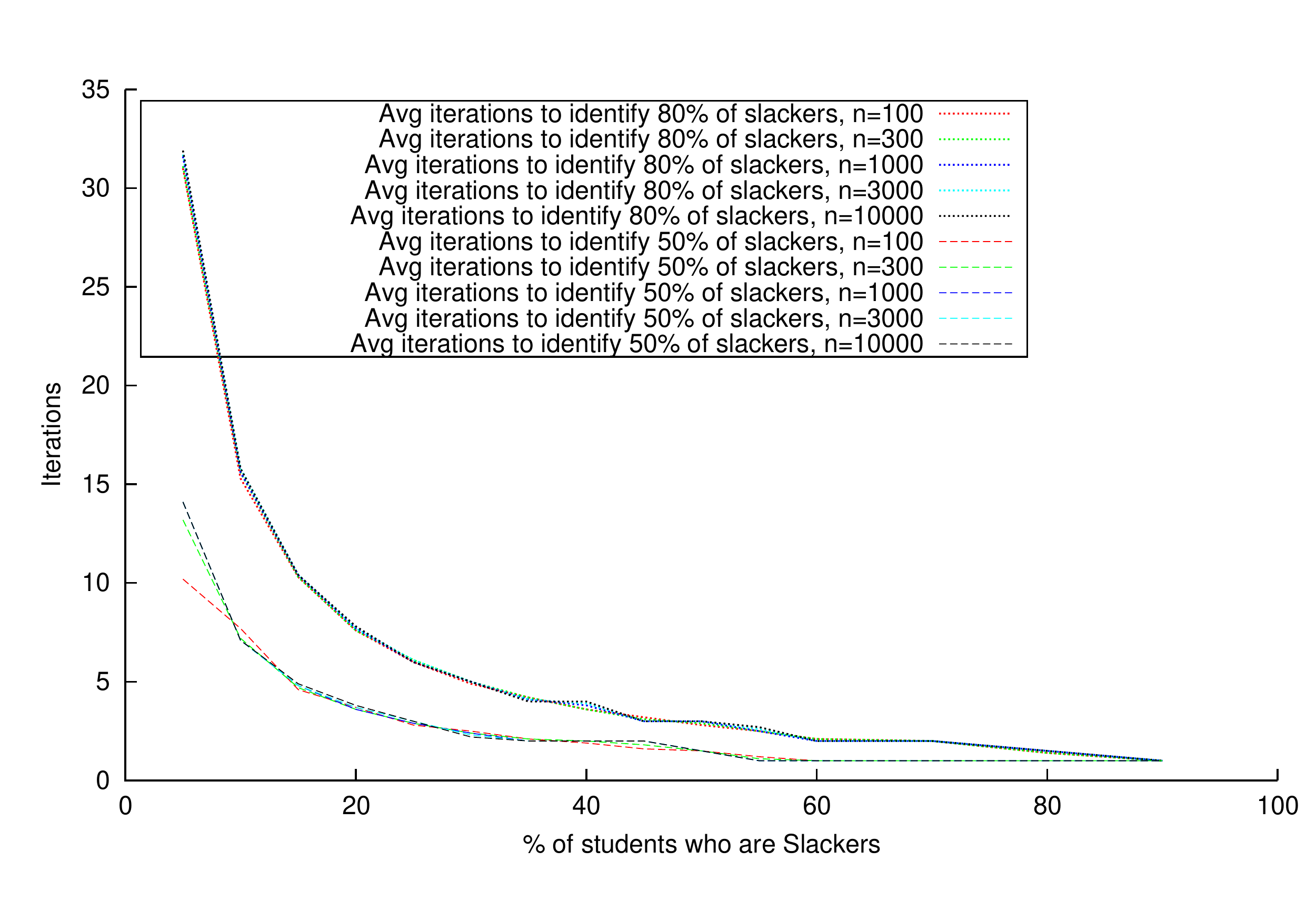} \\
\includegraphics[width=.5\textwidth, angle=270]{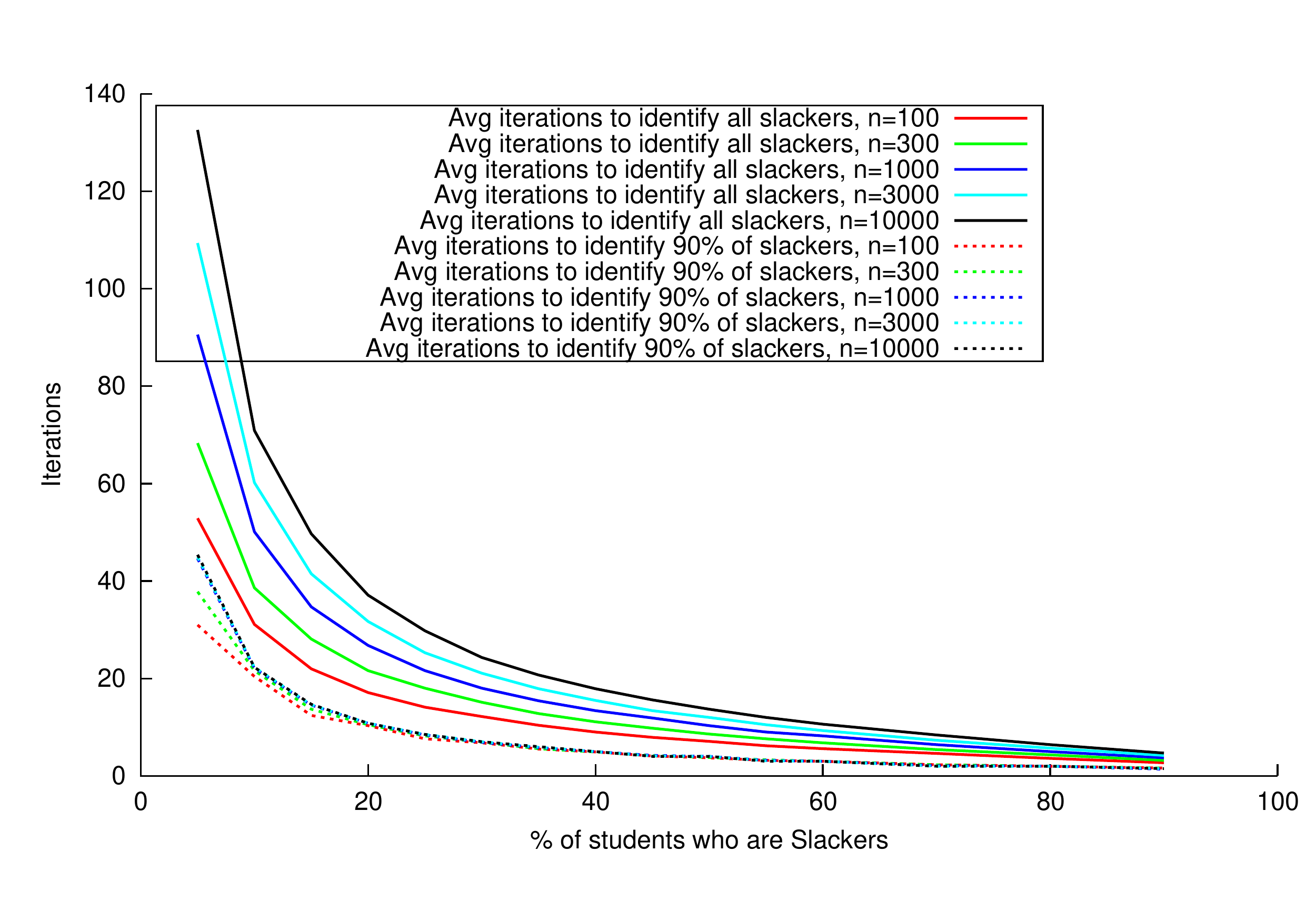} 
\caption{\label{fig-results} Results for number of random tests needed
to identify various percentages of slackers, for various values of the set
size, $n$, and slacker percentage, $\epsilon$.}
\end{center}
\end{figure}

In a nonadaptive randomized strategy, the most information is gathered by
randomly matching of the members of $X$ and testing each matched
pair.
Thus, for any slacker, $s$, the probability $s$ is not paired with 
another slacker is at least $(1-\epsilon)$.
So, after $k$ independent rounds of testing, the probability $s$ has not
been discovered to be a slacker is at least $(1-\epsilon)^k$, which
we can bound as
\[
(1-\epsilon)^k \ge \left( \frac{1-\epsilon}{e}\right)^{\epsilon k} ,
\]
by an inequality due to Niculescu and Vernescu~\cite{niculescu2004two}.
Thus, we have the following.

\begin{theorem}
For $2/n\le \epsilon\le 1/2$, we require $\Omega((1/\epsilon)\log n)$ rounds
of testing for each slacker to be identified, with probability at least
$1-1/n$, in a nonadaptive randomized testing scheme for a set of $n$
members having $\epsilon n$ slackers.
\end{theorem}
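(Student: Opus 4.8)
The plan is to lower-bound, for one fixed slacker $s$, the probability that $s$ escapes detection, and to show this probability exceeds $1/n$ unless $k=\Omega((1/\epsilon)\log n)$. Since the schemes in question repeatedly sample independent uniformly random matchings, in a single round $s$ is paired with exactly one other member, each of the $n-1$ others equally likely; as at most $\epsilon n-1$ of them are slackers, $s$'s partner is a slacker with probability at most $(\epsilon n-1)/(n-1)$, so with probability at least $1-\epsilon$ that round fails to pair $s$ with a slacker. Because performance-based tests compute OR, the only outcome that can distinguish $s$ from a worker is a \textbf{false}, which requires $s$ to be paired with a slacker; hence a slacker never paired with another slacker cannot be identified, and so, using the independence of the rounds, the probability that $s$ remains unidentified after $k$ rounds is at least $(1-\epsilon)^k$. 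For the scheme to identify $s$ with probability at least $1-1/n$ we therefore need $(1-\epsilon)^k\le 1/n$.

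It remains to convert this into a bound on $k$. Invoking the inequality $(1-\epsilon)^k\ge\big((1-\epsilon)/e\big)^{\epsilon k}$ of Niculescu and Vernescu together with the hypothesis $\epsilon\le 1/2$, which gives $(1-\epsilon)/e\ge 1/(2e)$, we obtain $1/n\ge(1-\epsilon)^k\ge(2e)^{-\epsilon k}$; taking logarithms yields $\epsilon k\ge\log_{2e}n$, i.e.\ $k=\Omega((1/\epsilon)\log n)$. The hypothesis $\epsilon n\ge 2$ enters only to ensure identification is possible at all, since a lone slacker is never paired with a slacker.

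The step I expect to need the most care is the claim that a slacker never paired with another slacker truly cannot be identified — one must rule out that some global inference from the transcript of OR-outcomes pins down $s$'s status. The clean way to handle this is to argue in the retroactive-identification model already used in the upper-bound analysis: that model only ever identifies more individuals than the real one, so a lower bound there suffices, and in it $s$ is identified if and only if $s$ is paired with a slacker in some round. Alternatively one gives a direct indistinguishability argument, exhibiting a worker that is likewise never paired with a slacker and swapping the two labels to obtain an input with an identical transcript; such a worker exists throughout the relevant range of $k$ by a counting argument on the number of available worker--slacker pairs.
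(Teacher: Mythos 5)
Your proposal follows essentially the same route as the paper: fix one slacker, bound its per-round probability of escaping a slacker partner below by $1-\epsilon$, use independence of the random matchings to get the $(1-\epsilon)^k$ escape probability, and apply the Niculescu--Vernescu inequality $(1-\epsilon)^k\ge\left((1-\epsilon)/e\right)^{\epsilon k}$ with $\epsilon\le 1/2$ to force $k=\Omega((1/\epsilon)\log n)$. Your added care about why a slacker never paired with another slacker is unidentifiable (via the retroactive-identification model or a label-swapping indistinguishability argument) is a point the paper leaves implicit, but it is elaboration of the same argument rather than a different one.
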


Therefore, the above analysis is tight to within constant factors.

\subsection{Experimental Results}
To get a better handle on the expected number of tests needed
to identify various percentages of slackers, we performed an experimental
study of the above nonadaptive randomized CPT algorithm.
We performed tests for values of $n$ ranging from 100 to 10000, with 
percentage of slackers ranging from 5\% to 90\%.
We then performed tests to determine the average number of tests required 
in order to identify 50\%, 80\%, 90\%, and 100\% of the slackers.
We show the results in Figure~\ref{fig-results}.

\ifFull
\section{Improved Bounds for Testing with Many Slackers}
\label{sec:many}
In this section, we consider the case when the
number of workers, $\delta n=(1-\epsilon)n$, is relatively
small, that is, when $\delta\le 1/2$.
We show that we can design a
non-trivial adaptive testing 
scheme that has a guaranteed small number of rounds of tests.

In general, if there are $\delta n$ workers then, after one round of tests,
at most $\delta n$ slackers will be paired with workers and, therefore, the other
at least $n - 2 \delta n$ slackers will have been identified.  
In the next rounds,
these known slackers can be paired with unknown students, thereby identifying them
(some of whom may be identified as slackers and others as workers).

\subsection{Two rounds}
If there are very few workers, say $\delta \leq 1/4$, then in one round of tests
at most $n/4$ slackers will be paired with workers and, therefore, the other
(at least $n/2$) slackers will be identified.  In a second round, we can pair
$n/2$ known slackers with the other $n/2$ students and thereby be able to identify
all other students.

\subsection{Three rounds}
If $1/4 < \delta \leq 1/3$ then at least $n/3$ slackers will be
identified in the first round and at most two more rounds of tests suffice to
identify all other students.  But we can do better.

If $1/4 < \delta \leq 5/14$, then at least $2n/7$ slackers will be
identified in the first round.
In this case, we can
conduct a second round of tests in which $2n/7$ known slackers are paired with unknown
students.
There are at least $9n/14$ slackers altogether
and so there will be at least $5n/14$ other slackers.
There are only $n/2$ pairs of which $2n/7$ are populated with known slackers,
so there are $3n/14$ pairs not populated with known slackers.
Therefore, after placing $3n/14$ of the other slackers in unoccupied pairs,
the remaining $n/7$ other slackers will be paired with slackers (known or unknown).

So, the second round of tests identifies at least $n/7$ slackers and,
since there had been $2n/7$ known slackers in separate pairs during this second round,
at least $n/7$ other students will also have been identified.
Therefore, after the second round of tests, there will have been identified
a total of at least $3n/7$ slackers and at least $n/7$ other students,
say $x$ other identified slackers and $n/7 -x$ identified workers.

If $x\geq n/14$ then at least $n/2$ slackers will have been identified and
a third round of tests suffices to identify all other students.
Otherwise, place the $3n/7+x$ known slackers in separate pairs and fill the remaining
$n/14-x$ pairs with $n/7-2x$ identified workers.
Again, a third round of tests suffices to identify all remaining students.

\subsection{Four rounds}
Similarly, if $5/14 < \delta \leq 19/46$ then at least $4n/23$ slackers will be
identified in the first round.
Following an analagous procedure and analysis, a second round of tests
identifies at least $2n/13$ slackers and at least $2n/13$ other students.
Proceed with a third round of tests in which all identified slackers are placed
in separate pairs and, to the extent possible, all identified workers are paired
with each other.  Using a similar analysis, at least an additional $3n/23$
slackers and at least $3n/23$ other students will be identified.
Finally, a fourth round of tests using the same placement strategy will suffice
to identify all remaining students.

\subsection{Five rounds}
If $\delta \leq 1/2$ then another approach enables all slackers to be identified
using at most five rounds of tests.
Partition the $n$ students into $n/4$ groups of four students each.
In three rounds of testing, each member of each group can have been paired with each
of the other members of that member's group.
If a group contains two or more slackers then the results of these three rounds of
tests will have enabled all members of that group to be identified.
If a group contains zero or one slacker then no members of the group will have been
identified.

If there are $\delta n$ workers, then there are $\epsilon n$ slackers,
where $\epsilon=1-\delta \geq 1/2$.

\begin{lemma}
There are at most $x = (1-\epsilon)n/3$ groups whose members will
not have been identified and therefore at most $x$ remaining unidentified slackers.
\end{lemma}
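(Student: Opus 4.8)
The plan is to argue by a direct counting (pigeonhole) argument on the number of \emph{workers}, since workers are the scarce resource when $\epsilon \ge 1/2$. First I would recall from the description of the five-round algorithm that after the three intra-group rounds a group becomes fully identified precisely when it contains two or more slackers: in three rounds of disjoint pairs among four students every one of the $\binom{4}{2}=6$ pairs is tested, so the analogue of Lemma~\ref{lem:two-slackers} applies to each group of four. Equivalently, a group remains unidentified exactly when it contains at most one slacker, and hence at least three of its four members are workers.

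Next I would let $g$ denote the number of unidentified groups and observe that, because the $n/4$ groups partition $X$, the worker-subsets of distinct groups are pairwise disjoint. Each of the $g$ unidentified groups contributes at least three distinct workers, so $3g$ is at most the total number of workers, which is $(1-\epsilon)n$. Rearranging gives $g \le (1-\epsilon)n/3 = x$, establishing the first claim.

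Finally, since every unidentified group contains at most one slacker, the number of still-unidentified slackers is at most the number of unidentified groups, i.e.\ at most $g \le x$, which gives the second claim.

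I do not expect a real obstacle here: the only point requiring care is the characterization of which groups fail, namely that ``$0$ or $1$ slacker'' is equivalent to ``at least $3$ workers'' for a group of size four, together with the fact that the partition structure makes the per-group worker counts additive with no double-counting; everything else is immediate arithmetic.
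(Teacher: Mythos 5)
Your proof is correct and is essentially the same as the paper's: the paper counts slackers, bounding $\epsilon n \leq x + 4((n/4)-x)$, while you count workers, bounding $3g \leq (1-\epsilon)n$, and these are the same inequality after rearrangement. The key characterization (an unidentified group has at most one slacker, hence at least three workers) and the final step bounding unidentified slackers by the number of unidentified groups also match the paper's reasoning.
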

\begin{proof}
The lemma can be proven by counting the slackers.
Each of the $x$ unidentified groups has at most one slacker
and the other $(n/4)-x$ groups can each have at most four slackers.
The total number of slackers is $\epsilon n \leq x + 4((n/4)-x)$,
which directly proves the lemma.
\end{proof}

So, after three rounds of testing,
at most $x$ slackers will not have been identified and therefore
at least $\epsilon n - x = (4\epsilon-1)n/3 \geq n/3$ slackers
will have been identified.

Conduct a fourth round of testing in which each of $n/3$ identified slackers is
placed in a separate pair, thereby identifying $n/3$ of the other students.
A fifth round of testing will identify all remaining unknown students.

We have provided algorithms enabling identification of
all the slackers in 2, 3, 4, or 5 rounds, depending on the value of $\delta\le 1/2$.
We leave open the problem of establishing whether these algorithms are optimal.

\fi

\section{Conclusion}
In this paper, we have given efficient algorithms for solving combinatorial
pair testing problems, along with lower bounds showing that our algorithms
are optimal to within constant factors.
All of our algorithms assume we are using performance-based tests.
Therefore,
one possible direction for future work would be to explore CPT algorithms
and applications for other kinds of tests (other than the exclusive-or
tests used in 
processor fault 
diagnosis~\cite{bhk-95,Beigel:1989,Beigel:1993,Blecher1983107,Pelc:1998}). Another direction would be to enlarge the size of tested groups beyond two and explore the effect of different group sizes on the numbers of rounds needed for testing.

\realsubsection*{Acknowledgments} 
This research was supported in part by
the National Science Foundation under grants 1011840, 1217322, 
and 1228639, and by the Office of
Naval Research under MURI grant N00014-08-1-1015.

{\raggedright 
\bibliographystyle{abbrv} 
\bibliography{pairs} 

\begin{thebibliography}{10}

\bibitem{Atallah:2008}
M.~J. Atallah, K.~B. Frikken, M.~Blanton, and Y.~Cho.
\newblock {Private combinatorial group testing}.
\newblock In {\em ACM Symp on Information, Computer and Communications Security
  (ASIACCS)}, pages 312{--}320, 2008.

\bibitem{bhk-95}
R.~Beigel, W.~Hurwood, and N.~Kahale.
\newblock {Fault diagnosis in a flash}.
\newblock In {\em Proc. IEEE Foundations of Computer Science (FOCS)}, pages
  571{--}580, October 1995.

\bibitem{Beigel:1989}
R.~Beigel, S.~R. Kosaraju, and G.~F. Sullican.
\newblock {Locating faults in a constant number of parallel testing rounds}.
\newblock In {\em ACM Symp. on Parallel Algorithms and Architectures (SPAA)},
  pages 189{--}198, 1989.

\bibitem{Beigel:1993}
R.~Beigel, G.~Margulis, and D.~A. Spielman.
\newblock {Fault diagnosis in a small constant number of parallel testing
  rounds}.
\newblock In {\em ACM Symp. on Parallel Algorithms and Architectures (SPAA)},
  pages 21{--}29, 1993.

\bibitem{Blecher1983107}
P.~M. Blecher.
\newblock {On a logical problem}.
\newblock {\em Discrete Mathematics}, 43(1):107{--}110, 1983.

\bibitem{du2000}
D.-Z. Du and F.~Hwang.
\newblock {\em {Combinatorial Group Testing and Its Applications}}.
\newblock Series on Applied Mathematics. World Scientific, 2000.

\bibitem{dg-acns-05}
W.~Du and M.~T. Goodrich.
\newblock {Searching for high-value rare events with uncheatable grid
  computing}.
\newblock In J.~Ioannidis, A.~Keromytis, and M.~Yung, editors, {\em Applied
  Cryptography and Network Security (ACNS)}, volume 3531 of {\em LNCS}, pages
  122{--}137. Springer, 2005.

\bibitem{Du:2004}
W.~Du, J.~Jia, M.~Mangal, and M.~Murugesan.
\newblock {Uncheatable grid computing}.
\newblock In {\em 24th Int. Conf. on Distributed Computing Systems (ICDCS)},
  pages 4{--}11, 2004.

\bibitem{Eppstein:2006}
D.~Eppstein, M.~T. Goodrich, and D.~S. Hirschberg.
\newblock {Improved combinatorial group testing algorithms for real-world
  problem sizes}.
\newblock {\em SIAM J. Comput.}, 36(5):1360{--}1375, 2006.

\bibitem{Goodrich2008199}
M.~T. Goodrich.
\newblock {Pipelined algorithms to detect cheating in long-term grid
  computations}.
\newblock {\em Theoretical Computer Science}, 408(2/3):199{--}207, 2008.

\bibitem{gat-iidf-05}
M.~T. Goodrich, M.~J. Atallah, and R.~Tamassia.
\newblock {Indexing information for data forensics}.
\newblock In J.~Ioannidis, A.~Keromytis, and M.~Yung, editors, {\em Applied
  Cryptography and Network Security (ACNS)}, volume 3531 of {\em LNCS}, pages
  206{--}221. Springer, 2005.

\bibitem{Hwang:2000}
F.~K. Hwang.
\newblock Random {$k$}-set pool designs with distinct columns.
\newblock {\em Probab. Eng. Inf. Sci.}, 14(1):49--56, Jan. 2000.

\bibitem{Jacobson:2008}
N.~Jacobson and S.~K. Schaefer.
\newblock Pair programming in {CS1}: overcoming objections to its adoption.
\newblock {\em SIGCSE Bull.}, 40(2):93--96, June 2008.

\bibitem{NIPS2012_0328}
Q.~Liu, J.~Peng, and A.~Ihler.
\newblock {Variational inference for crowdsourcing}.
\newblock In P.~Bartlett, F.~C.~N. Pereira, C.~J.~C. Burges, L.~Bottou, and
  K.~Q. Weinberger, editors, {\em Advances in Neural Information Processing
  Systems (NIPS)}, pages 701{--}709, 2012.

\bibitem{NagWilFer-SIGCSE-03}
N.~Nagappan, L.~Williams, M.~Ferzli, E.~Wiebe, K.~Yang, C.~Miller, and
  S.~Balik.
\newblock {Improving the CS1 experience with pair programming}.
\newblock In {\em Proc. 34th SIGCSE Technical Symp. on Computer Science
  Education (SIGCSE '03)}, volume 35.1 of {\em SIGCSE Bulletin}, pages
  359{--}362, 2003.

\bibitem{niculescu2004two}
C.~P. Niculescu and A.~Vernescu.
\newblock {A two-sided estimate of $e^x-(1+ x/n)^n$}.
\newblock {\em Journal of Inequalities in Pure and Applied Mathematics}, 5(3),
  2004.

\bibitem{Pelc:1998}
A.~Pelc and E.~Upfal.
\newblock {Reliable fault diagnosis with few tests}.
\newblock {\em Comb. Probab. Comput.}, 7(3):323{--}333, 1998.

\bibitem{Stadje1990}
W.~Stadje.
\newblock {The collector's problem with group drawings}.
\newblock {\em Advances in Applied Probability}, 22(4):866{--}882, 1990.

\bibitem{WilKes-03}
L.~Williams and R.~R. Kessler.
\newblock {\em {Pair Programming Illuminated}}.
\newblock Addison-Wesley, 2003.

\bibitem{WilKesCun-IS-00}
L.~Williams, R.~R. Kessler, W.~Cunningham, and R.~Jeffries.
\newblock {Strengthening the case for pair programming}.
\newblock {\em IEEE Software}, 17(4):19{--}25, 2000.

\bibitem{Yao1986}
A.~C. Yao.
\newblock {How to generate and exchange secrets}.
\newblock In {\em Proceedings of the 27th Annual Symposium on Foundations of
  Computer Science}, pages 162{--}167, Washington, DC, USA, 1986. IEEE Computer
  Society.

\end{thebibliography}
}

\clearpage

\end{document}